\begin{document}
\title{A Privacy-Preserving Graph Encryption Scheme Based on Oblivious RAM}
%
%
\author{Seyni Kane \inst{2,3}\thanks{The research presented in this paper was conducted while the author was affiliated with IRT-SystemX} \and
Anis Bkakria\inst{1}\orcidID{0000-0002-9758-4617} 
 }
\authorrunning{S. Kane and A. Bkakria}
%
\institute{IRT SystemX, Palaiseau, France\\ \and Applied Crypto Group, Orange Innovation, 14000 Caen, France. \and SAMOVAR, Télécom SudParis, Institut Polytechnique de Paris, France.
}
\maketitle              
\begin{abstract}
Graph encryption schemes play a crucial role in facilitating secure queries on encrypted graphs hosted on untrusted servers. With applications spanning navigation systems, network topology, and social networks, the need to safeguard sensitive data becomes paramount. Existing graph encryption methods, however, exhibit vulnerabilities by inadvertently revealing aspects of the graph structure and query patterns, posing threats to security and privacy. In response, we propose a novel graph encryption scheme designed to mitigate access pattern and query pattern leakage through the integration of oblivious RAM and trusted execution environment techniques, exemplified by a Trusted Execution Environment (TEE). Our solution establishes two key security objectives: (1) ensuring that adversaries, when presented with an encrypted graph, remain oblivious to any information regarding the underlying graph, and (2) achieving query indistinguishability by concealing access patterns. Additionally, we conducted experimentation to evaluate the efficiency of the proposed schemes when dealing with real-world location navigation services.

\keywords{Privacy Enhancing Technology \and Graph Encryption Scheme \and Oblivious RAM \and Trusted Execution Environment}
\end{abstract}

\section{Introduction}
Cloud computing is a paradigm that offers on-demand storage and computing resources to individuals and enterprises. Large data sizes motivate storage outsourcing for reasons of cost, availability, and efficiency. Storing and processing data securely on the cloud can be challenging. To protect outsourced data, Structured Encryption (SE) \cite{chase2010structured} has been proposed. A structured encryption scheme encrypts structured data in such a way that it can be queried through the use of a query-specific token that can only be generated with knowledge of the secret key. In addition, the query process reveals no useful information about either the query or the data. Structured encryption schemes include graph encryption schemes \cite{ghosh2021efficient,meng2015grecs}.

Graph encryption schemes have many applications, such as private navigation systems \cite{wu2016privacy}, online social networks \cite{lai2019graphse2}, modeling highly confidential infrastructure and more. The main challenge is to design graph encryption schemes that are secure, expressive, and efficient. Some cryptographic techniques, such as Fully Homomorphic Encryption (FHE) \cite{gentry2009fully} and secure Multi-Party Computation (MPC), can achieve high security but low efficiency. Other solutions trade off some security for better efficiency by allowing controlled leakage \cite{ghosh2021efficient,meng2015grecs}. One example of leakage is the Access Pattern (AP), which is the set of nodes and edges accessed by a query on the encrypted graph. Another example is the Query Pattern (QP), which is the set of queries issued on the encrypted graph. An adversary who observes the AP and QP can infer information about the graph structure, connectivity, and content, as well as the query frequency, similarity, and user interests.

Ghosh, Kamara and Tamassia (GKT) \cite{ghosh2021efficient} proposed a practical graph encryption scheme that supports shortest path queries with a good trade-off between efficiency and security. They encrypts the graph using a recursive algorithm that partitions the graph into subgraphs and encrypts them separately. It has optimal preprocessing time and space, and query time proportional to those for the unencrypted graph. However, it still leaks AP and QP. Falzon et al. \cite{falzon2022efficient} presented an attack on the GKT scheme that exploits the QP leakage to recover the queries with high probability.

Due to access pattern and query pattern leakage, existing graph encryption schemes are not suitable for highly sensitive graphs. A possible solution is to use Oblivious RAM (ORAM) \cite{goldreich1996software} techniques to hide the AP and QP leakage from an untrusted server, and avoid such as the ones proposed in  \cite{falzon2022efficient}.  ORAM is a technique that allows a client to access data in a data store without revealing which item it is interested in. It does this by accessing multiple items each time and periodically reshuffling some or all of the data in the data store.

However, using ORAM for graph encryption schemes can be inefficient, because it requires the client to interact with the server for each recursive call to fetch the path between two nodes. This results in high communication complexity. A way to overcome this issue is to use Trusted Execution Environments (TEE), which is a technology that allows an application to create a protected area in its address space, called an enclave. The data and instructions inside the enclave are confidential and secure even if the attacker has full control of the operating system. TEE can be used to create more efficient oblivious data structures \cite{fuhry2017hardidx,rachid2020enclave,sasy2017zerotrace,wu2023obi}, by placing a mini-client inside the untrusted server. This reduces the communication complexity between the client and the server.

\subsection{Our Contributions}
In this paper, we introduce an innovative graph encryption scheme designed to facilitate shortest path queries without compromising any information pertaining to the underlying data or the query itself. Our approach builds upon the foundation of the GKT scheme \cite{ghosh2021efficient}, widely recognized as the state-of-the-art solution for graph encryption, and incorporates advanced privacy-preserving techniques such as Oblivious RAM (ORAM) \cite{goldreich1996software} and hardware isolation mechanisms like Trusted Execution Environments (TEE) \cite{costan2016intel}.

By seamlessly integrating ORAM, our scheme effectively eliminates Access Pattern and Query Pattern leakage. Additionally, the incorporation of TEE optimizes communication complexity between the client and the server by establishing a secure enclave within the untrusted server. The key contributions proposed in this work can be resumed as follows:
    \begin{itemize}
        \item We introduce TOGES, the first graph encryption scheme that synergistically employs ORAM and TEE to conceal both AP and QP leakage from an untrusted server. We prove the adaptive semantic security of our scheme, along with the assurance of AP indistinguishability and QP indistinguishability.
        \item We implement our scheme and conduct extensive evaluations on a real location navigation dataset, showcasing its practicality and efficiency.
    \end{itemize}

\subsection{Organization of the paper}
The rest of this paper is organized as follows: Section \ref{sec:related-work} reviews the related work on graph encryption, ORAM, and TEEs. It also discusses recent constructions that combine ORAM and TEE to create more efficient oblivious data structures. Section \ref{sec:preliminaries} introduces the notation, definitions, and background schemes used in this paper, such as GKT, Path ORAM, and TEEs. Section \ref{sec:sys-secu-model} defines the system and security model that we consider for our construction. It also states the assumptions, threat model, and security goals that we aim to achieve. Section 5 describes our construction in two versions: a basic version utilizing ORAM-based graph encryption, and an enhanced version leveraging TEE-ORAM based graph encryption. Section \ref{sec:secu-analysis} analyzes the security of our construction and provides a formal security proof. Section \ref{sec:perfo-analysis} analyzes the performance of our construction. Section \ref{sec:conclusion} concludes the paper and suggests some possible future directions.

\section{Related Work}
\label{sec:related-work}
In this section, we delve into the advancements in the key theoretical foundations underpinning our research. We commence with an exploration of graph encryption, followed by an examination of ORAM. Lastly, we survey pertinent literature that leverages TEE to enhance the efficiency of oblivious data structures.

Several approaches have been proposed for private shortest path computation. Previous studies \cite{mouratidis2013strong,mouratidis2012shortest,xi2014privacy} have delved into PIR-based solutions aimed at concealing the client's location. In the approaches outlined in \cite{mouratidis2012shortest,mouratidis2013strong}, the client initiates a confidential retrieval of specific subregions within the graph that pertain to its query. Following this, the client locally calculates the shortest path within the retrieved subgraph. In the work presented in \cite{xi2014privacy}, the client discreetly solicits columns from the next-hop routing matrix to glean information about the subsequent hop in the shortest path. While these methodologies effectively safeguard the privacy of the client's location, it is essential to note that they do not address the concealment of the server's routing details.

Diverging from prior methodologies, Graph encryption schemes are a type of structured encryption that allows querying encrypted graph. It was introduced by Chase and Kamara \cite{chase2010structured} in their seminal work on structured encryption which is a generalization of searchable encryption to the setting of arbitrarily data structure. They also presented a model and a security definition for graph encryption. Meng et al. \cite{meng2015grecs} proposed three graph encryption schemes that support approximate shortest distance queries on encrypted graphs, using different distance oracles, each with a slightly different leakage profile. Ghosh et al. \cite{ghosh2021efficient} proposed an efficient graph encryption scheme that supports exact shortest path queries on encrypted graphs, using a recursive algorithm based on SP-matrix and non-response revealing Dictionary Encryption Scheme (DES) \cite{cash2014dynamic}. However, their scheme has some leakage that can be exploited by query recovery attacks, as shown by Falzon and Paterson \cite{falzon2022efficient}.

Another family of solutions are those based on Oblivious RAM (ORAM), a technique employed to conceal the access pattern of structured encryption—an area of primary interest. The inception of ORAM can be attributed to Goldreich and Ostrovsky \cite{goldreich1996software}, who pioneered its development in the context of software protection and simulation. Subsequently, numerous practical constructions have emerged, with efforts aimed at enhancing the efficiency of Goldreich and Ostrovsky's initial model, as evidenced by works such as \cite{chan2017oblivious,yeo2021oblivious}. However, a notable drawback of these advancements lies in their substantial client-side storage requirements, which scale proportionally with the size of the underlying data structure.

Addressing the need for more practical and space-efficient ORAMs, Shi et al. introduced recursive ORAMs \cite{shi2011oblivious}. This innovative approach is rooted in tree data structures, where each node functions as a small bucket ORAM. Accessing an element involves traversing a path in the tree, leading to partial reshuffling for enhanced security.

Subsequent contributions, such as those by \cite{moataz2014recursive,stefanov2018path}, further refined and expanded upon the concept of recursive ORAMs. Notably, Path ORAM \cite{stefanov2018path} stands out as one of the most intriguing implementations, currently recognized as among the most efficient ORAM schemes. Path ORAM employs a binary tree structure, accessing data along the path from the root to the leaf. After each access, the data undergoes shuffling and re-encryption to mitigate potential information leakage. The challenge of large client-side storage, exemplified by the position map in Path ORAM, is effectively addressed by incorporating a second ORAM for achieving recursion.

There are recent advancements that integrate Oblivious RAM (ORAM) techniques with Trusted Execution Environments (TEE) to enhance the efficiency of oblivious data structures.Trusted Execution Environments, such as Intel's Software Guard eXtensions (SGX) \cite{anati2013innovative,costan2016intel,mckeen2013innovative}, employ hardware-based approaches to optimize search operations on encrypted data within an untrusted server. Intel's SGX, in particular, has inspired various applications of TEEs.

One application involves constructing secure indexes for encrypted data. For instance, Fuhry et al. introduced HardIDX \cite{fuhry2017hardidx}, an encrypted database index based on B$^+$ trees, implementing search functionality within an SGX enclave. Mishra et al. proposed Oblix \cite{mishra2018oblix}, an oblivious search index that conceals memory accesses and result size while supporting insertions and deletions. Oblix utilizes novel oblivious-access techniques on hardware enclave platforms like Intel SGX.

Another application is private search over encrypted data. Cui et al. developed an SGX-assisted scheme \cite{cui2018preserving} that protects access patterns against side channel attacks without compromising search efficiency. Amjad et al. introduced the first SGX-supported dynamic Searchable Symmetric Encryption (SSE) constructions \cite{amjad2019forward}, ensuring both backward and forward privacy.

Relevant to our work are approaches utilizing oblivious RAM to conceal access patterns, leveraging secure enclaves like Intel SGX for improved efficiency. For instance, Sasy et al. proposed Zerotrace \cite{sasy2017zerotrace}, an enclave-based ORAM scheme offering security against powerful adversaries. Rachid et al. devised efficient techniques for constructing ORAMs using Intel SGX \cite{rachid2020enclave}, implementing and comparing multiple ORAM schemes. Wu et al. introduced OBI \cite{wu2023obi}, a multi-path ORAM demonstrating superiority over traditional single-path ORAM in terms of local stash size and insertion efficiency, while ensuring strong security guarantees.

As previously mentioned, prevalent graph encryption methods, exemplified in \cite{ghosh2021efficient,meng2015grecs}, exhibit vulnerabilities, particularly in Access Pattern (AP) and Query Pattern (QP) leakage, which could be exploited for information extraction. In response, our privacy-preserving graph encryption scheme has been meticulously designed to eliminate both AP and QP leaks. In comparison to existing recursive ORAM techniques \cite{moataz2014recursive,stefanov2018path}, we enhance our ORAM execution's efficiency by introducing a non-interactive approach through server-side TEE integration, thereby fortifying security while achieving superior performance.

\section{Preliminaries}
\label{sec:preliminaries}
In this section, we will introduce some fundamental definitions and cryptographic notions essential for the rest if this paper.

\subsection{Graph}
A graph $G = (V, E)$ consists of a set of vertices $V$ and a set of edges $E$ that connect vertices in pairs. A graph is directed if the edges have a direction from one vertex to another. Two vertices $u, v \in V$ are connected if there is a path from $u$ to $v$ in $G$. We only consider static graphs that support \textbf{Single Pair Shortest Path (SPSP)} queries. A SPSP query $SPath(G, (u, v))$ \cite{ghosh2021efficient}, takes a graph $G = (V, E)$ and two vertices $u, v \in V$ as input and returns a simple path $p_{u, v}$, i.e, a list of nodes $(w_1, ..., w_t)$ such that $(u, w_1), (w_1, w_2), \dots , (w_t, v) \in E$. If there is no path from $u$ to $v$ in $G$, SPath returns $\bot$. 

A \textbf{tree} is a graph that is connected and has no cycle. A rooted tree $T = (V, E, r)$ is a tree with a root vertex $r$. For any rooted tree $T = (V, E, r)$ and vertex $v \in V$, we write $T[v]$ for the subtree of $T$ that has v and all its descendants.

\subsection{SP-matrix \cite{ghosh2021efficient}}
We uses SP-matrix as a data structure to for graphs, in order to be able to perform recursive queries on the graphs.

\begin{definition}[SP-matrix\cite{cormen2001introduction}]\label{def_graphs-iso} 
An SP-matrix structure is a $|V| \times |V|$ matrix $M_G$ that is built from a graph $G = (V, E)$ by running an All-Pairs Shortest Path (APSP) algorithm between all pairs of vertices. We associate the rows and columns of $M_G$ with vertices in $V$ and use $(v_i, v_j)$ to refer to the item at row $v_i$ and column $v_j$. For each pair of vertices $(v_i, v_j)$, we store the first vertex on a shortest path from $v_i$ to $v_j$ at $M_G [v_i, v_j]$, or $\bot$ if there is no path. All the diagonal entries in $M_G$ are set to $\bot$.
\end{definition}

Given $M_G$, we can find the shortest path between two vertices $(v_i, v_j)$ as follows. Look at $(v_i, v_j)$ in $M_G$ to get an item $w$. If $w \neq \bot$, then $w$ is the first vertex on the shortest path from $v_i$ to $v_j$ and we repeat the process with $(w, v_j)$. The recursion ends when we see a $\bot$. The query time is optimal, i.e., it is proportional to the length of the shortest path.

\subsection{GKT}
The GKT scheme \cite{ghosh2021efficient} supports single pair shortest path (SPSP) queries. An SPSP query on a graph $G = (V, E)$ takes as input a pair of vertices $(u, v) \in V \times V$, and outputs a path $p_{u,v} = (u,w_1, \dots , w_t, v)$ such that $(u, w_1), (w_1, w_2), \dots, $ $(w_{t-1}, v) \in E$ and $p_{u,v}$ is of minimal length. SPSP queries may be answered using a number of different data structures.

The GKT scheme makes use of the SP-matrix. For a graph $G = (V, E)$, the SP-matrix $M_G$ is a $|V| \times |V|$ matrix defined as follows. Entry $M_G[i, j]$ stores the second vertex along the shortest path from vertex $v_i$ to $v_j$; if no such path exists, then it stores $\bot$. An SPSP query $(v_i, v_j)$ is answered by computing $M_G[i, j] = v_k$ to obtain the next vertex along the path and then recursing on $(v_k, v_j)$ until $\bot$ is returned. At a high level, the GKT scheme proceeds by computing an SP-matrix for the query graph and then using this matrix to compute a dictionary $SPDX'$. This dictionary is then encrypted using a dictionary encryption scheme (DES) such as \cite{chase2010structured,cash2014dynamic}. To ensure that the GKT scheme is non-interactive, the underlying DES must be response-revealing \cite{cash2014dynamic}. 

\subsubsection{Leakage of the GKT Scheme}
Ghosh et al. \cite{ghosh2021efficient} provide a formal specification of their scheme’s leakage. Informally, the setup leakage of their scheme is the number of vertex pairs in $G$ that are connected by a path, while the query leakage consists of the query pattern (i.e., if and when a query is repeated), the length of the shortest path, and what they refer to as the path intersection pattern (PIP). Given a of sequence of SPSP queries $(q_1, \dots, q_t)$, where $q_i = (u_i, v_i)$, PIP of $q_t$ reveals the intersections of $p_t = SPSP(G, (u_t, v_t))$  with previous shortest paths that have the same destination $v_t$, (see \cite{ghosh2021efficient} for more details).

\subsubsection{Attacks on the GKT Scheme}
Falzon et al. \cite{falzon2022efficient} proposed an efficient query recovery attack against GKT, exploiting query leakage to mount the attack. In their approach, the adversary receives the original graph along with leaked information about specific subsets of queries. They leverage the query leakage within the GKT scheme to execute a Query Recovery (QR) attack against it. This attack, feasible for an honest-but-curious server, necessitates knowledge of the graph $G$. While this might seem like a stringent requirement, it is, in fact, less demanding than the conditions allowed in the security model of \cite{ghosh2021efficient}, where the adversary even has the liberty to choose $G$.

The attack comprises two phases. Initially, there is an offline pre-processing phase conducted on the graph $G$. In this stage, they extract a plaintext description of all its shortest path trees from $G'$. Subsequently, they process these trees and compute candidate queries for each query, utilizing the canonical labels of each tree. A canonical label serves as an encoding of a graph, facilitating the determination of graph isomorphism. The canonical label of a rooted tree can be efficiently computed using the Aho-Hopcroft-Ullman (AHU) algorithm \cite{aho1983data}.

\section{System and Security Models}
\label{sec:sys-secu-model}
In this section, we introduce the system model, system definition,
threats model and security requirements of our construction.

\subsection{System Model}
In the presented scenario, we examine a situation where a client \texttt{C}, constrained by limited resources, seeks to delegate both data storage and computation tasks to an untrusted cloud server \texttt{SP} equipped with a TEE. This two-party system comprises a client, the data owner, and a cloud server, responsible for hosting the data. Crucially, the client places trust in the TEE embedded within the server, treating it as an extension of its own infrastructure.

\subsection{Assumptions and Attacker Model}
As mentioned earlier, our scheme involves two parties: \texttt{C} responsible for encrypting a graph-based dataset and uploading it to the \texttt{CS}. \texttt{C} is the party that submits queries, and assume its trustworthiness.

\texttt{CS} hosts a TEE and stores the encrypted tree ORAM. Additionally, \texttt{CS} loads data into the enclave for query processing. We consider the possibility of an attacker taking control of the operating system on \texttt{CS}. 

TEE provides a secure execution environment by cryptographically safeguarding code and data on an untrusted server. However, it is susceptible to side-channel attacks. These attacks enable adversaries to extract sensitive information by observing the effects of processing without direct access to the information source. Although the operating system (OS) is untrusted, it still manages the enclave's resources, allowing it to monitor the enclave's behavior. Specifically, the OS can generate a precise trace of the enclave's code and data accesses at the page granularity. This trace may be exploited later to deduce information about the outsourced data and/or executed SPSP queries. It's important to note that we do not address these side-channel attacks in our work. Thus we assume an adversary who cannot extract information from the TEE.

Conversely, the adversary has visibility into all communications between the TEE and the external entities, including the untrusted domain and \texttt{C}.

Particularly, the TEE is supposed to manage the position map and handles queries from \texttt{C} without disclosing sensitive information to \textbf{CS}. It also establishes a secure channel with \texttt{C} to protect their communication. We presume the TEE's trustworthiness and consider the data and data access within the TEE as secure. We Assume \textbf{CS} to be an  honest but curious (semi-honest) entity, meaning it follows the protocol while attempting to acquire information about the outsourced data.

\subsection{Security Goals}
We are interested in building a privacy preserving graph encryption scheme that meets the following security requirements:
\begin{enumerate}
        \item Given an encrypted graph, an adversary cannot learn any information about the underlying graph.
        \item Given the view of a polynomial number of query executions for an adaptively generated sequence of queries $q = (q_1, \dots, q_n)$, an adversary cannot learn any information about $q$ and its access pattern, except the size of the respective path.
\end{enumerate}

\subsection{Syntax}
Our construction is defined by three algorithm \texttt{Setup}, \texttt{Query}, and \texttt{Reveal}, and is described in three versions (a trivial, enhanced, and more enhanced version). In all version of \texttt{OBGE} the \texttt{Setup} and the \texttt{Reveal} algorithms are executed by \texttt{C}. And the \texttt{Query} algorithm is executed by \texttt{C} in collaboration with \textbf{CS}.
\begin{itemize}
    \item $\texttt{Setup} (G = (V, E), \lambda, P, \texttt{SKE})$: is PPT algorithm that takes as input a gaph $G$, a security parameter $\lambda$, a PRF $P$, and symmetric-key encryption \texttt{SKE}. It outputs an ecrypted graph in a tree ORAM $T$ data structure, a position map $PM$, and keys $key := (K_1, K_2, K')$. The keys are generated during the key generation sub-protocol of the setup algorithm. $K_1$ and $K_2$ are for \texttt{SKE}, and $K'$ is for $P$.
    \item $\texttt{Query} (q = (u, v), T, PM, K_2, K')$: is PPT algorithm that takes as input shortest path query $q = (u, v)$, an encrypted tree ORAM $T$, a position $PM$, and keys $(K_2, K')$. It outputs an encrypted path $resp$.
    \item $\texttt{Reveal}(resp, K_1)$: is PPT algorithm that takes as input an encrypted shortest path $resp$, and the key $K_1$. It outputs the decrypted shortest path $p_{u, v}$. 
\end{itemize}

\subsection{Security Definition}
We adopt the standard security definition for graph encryption schemes \cite{chase2010structured,meng2015grecs,ghosh2021efficient}, that follows the real/ideal simulation paradigm and is parameterized by a leakage function $\mathcal{L}$ that formalizes the leakage of the scheme.\\

\begin{definition}
    Let $\Pi = (Setup, Query, Reveal)$ denote a graph encryption scheme with respect to the above syntax, and $\lambda$ a security parameter. Let  $\mathcal{A}$ a PPT stateful adversary, $\mathcal{CH}$ his challenger, and $\mathcal{S}$ a stateful simulator that gets the leakage functions $\mathcal{L}$. We consider the probabilistic experiments \textbf{Real}$_{\Pi, \mathcal{A}}(\lambda)$ and \textbf{Ideal}$_{\Pi, \mathcal{A},\mathcal{S}}(\lambda)$ described as follows:\\

    \paragraph{\textbf{Real}$_{\Pi, \mathcal{A}}(\lambda)$ :}
        
        \begin{itemize}
            \item The challenger $\mathcal{CH}$ runs the key generation procedure from the \texttt{Setup} algorithm, and generate $key := (K_1, K_2, K')$ for symmetric-key encryption and the pseudo-ramdom function.
            \item $\mathcal{A}$ chooses a graph $G = (V, E)$ and sends it to $\mathcal{CH}$ who encrypt it using $K_1$ and output an encrypted tree ORAM $T$.
            \item $\mathcal{A}$ makes a polynomial number of adaptive queries $q = (q_1, \dots, q_n)$, $n \in \mathbf{N}, n \leq Poly(\lambda)$. For each query $q_i$, $\mathcal{A}$ receives $p_i$ and a token ($tk_i \gets P_K'(q_i)$) from the challenger $\mathcal{CH}$ as the transcript of the execution of the \texttt{Query} and \texttt{Reveal} algorithm. Finally, $\mathcal{A}$ returns a bit $b$ as the output by the experiment.
        \end{itemize}
    
    \paragraph{\textbf{Ideal}$_{\Pi, \mathcal{A},\mathbf{S}}(\lambda)$ :}
        \begin{itemize}
            \item $\mathcal{A}$ chose graph $G = (V, E)$.
            \item  Given $\mathcal{L}(G)$, $\mathcal{S}$ output an encrypted tree ORAM $T$, and send it to $\mathcal{A}$.
            \item  $\mathcal{A}$ makes a polynomial number of adaptive queries $q = (q_1, \dots, q_n), n \in \mathbf{N}, n \leq Poly(\lambda)$. For each query $q_i$, $\mathcal{S}$ given $\mathcal{L}(G, q_i)$ returns token $tk_i$ and $p_i$ to $\mathcal{A}$. By simulating the execution of the \texttt{Query} and \texttt{Reveal} algorithms with $\mathcal{S}$ playing the role of \texttt{C}, and  $\mathcal{A}$ playing the role of the server. Finally, $\mathcal{A}$ returns a bit $b$ as the output by the experiment.
        \end{itemize}
    where $b = 1$ indicates that $\mathcal{A}$ believes it is interacting with the real experiment, and $b = 0$ indicates otherwise.
    We say that $\Pi$ is $\mathcal{L}$-secure against adaptive chosen-query attacks if for all PPT adversaries $\mathcal{A}$, there exists a PPT simulator $\mathcal{S}$ such that 
        \begin{equation}
            |PR [\textbf{Real}_{\Pi, \mathcal{A}}(\lambda)] - PR[\textbf{Ideal}_{\Pi, \mathcal{A},\mathcal{S}}(\lambda)]| \leq negl(\lambda)
        \end{equation}    
\end{definition}

\section{ORAM Based Graph Encryption (\texttt{OBGE})}\label{sec:PPGE}
In this section, we will present our construction: a trivial version without recursion, which serves as a model for the scheme, an enhanced version that uses TEEs, and the enhanced version which is a recursive version of the enhanced version.

\subsection{A First Construction}\label{subsec2}
Here we give the description of our trivial construction followed by the algorithms with their detailed explanation. Our trivial construction uses a binary tree storage on the server as in Path ORAM, to eliminate the access and query pattern in the GKT schemes.

\subsubsection{Description of the Protocol}

We propose a novel protocol \texttt{OBGE = (Setup, Query, Reveal)} for privacy-preserving graph encryption based on GKT \cite{ghosh2021efficient}, and Path ORAM \cite{stefanov2018path}. Our protocol allows a client to outsource a graph to a server and query it efficiently without revealing any information about the graph structure or the query results. We use a symmetric-key encryption scheme \texttt{SKE = (Gen, Encrypt, Decrypt)} similar to GKT  and a pseudorandom function $P$ in our protocol. 

The setup phase includes The key generation procedure, that produces $K_1$ and $K_2$ for \texttt{SKE}, and $K'$ for the PRF $P$. In the setup phase, the client encrypts the graph and sends it to the server. 
In the query phase, we use the \texttt{Access} protocol in Path ORAM \cite{stefanov2018path} as sub-procedure in order to be able to perform the oblivious access to the ORAM structure. To do so the client sends a query token the ORAM controller which send back the corresponding encrypted path. The reveal phase is just a decryption of the encrypted path returned by the query phase.

\paragraph*{Setup} The setup process is depicted in Algorithm \ref{alg:1}. Suppose we have a graph $G = (V, E)$, where $V$ is the set of vertices and $E$ is the set of edges. \texttt{C} creates an SP-matrix $M_G$, which stores the shortest paths between any pair of vertices in $G$. Then, the client performs the same steps as in GKT to obtain a dictionary $SPDX$, which maps each pair of vertices $(u, v)$ to the next vertex $w$ on the shortest path linking $u$ to $v$. The client then transforms $SPDX$ into a new dictionary $SPDX'$ as follows. For each entry $(u, v) \mapsto w$ in $SPDX$, the client generates two tokens: one for the key, $tk \gets P_{K'}(u, v)$, and one for the value, $tk' \gets P_{K'}(w, v)$, using a pseudorandom function $P$. The client also encrypts the value using a symmetric key encryption scheme, $ct' \gets \texttt{SKE.Encrypt}(K_1, (w, v))$, where $K_1$ is a secret key. The client then sets $SPDX'[tk] := (tk', ct')$ to construct the new dictionary.

The client creates a binary tree ORAM $T$ of size $N$, where $ZN > |V|^2$, and each node acts as a bucket capable of holding up to $Z$ blocks. A block is represented by the tuple $(tk, (tk', ct'), x)$, where $tk$ is the token identifying the block (linked to the query $(u,v)$), $tk'$ is the token identifying the next hop on the shortest path between $u$ and $v$, $ct'$ is the encryption of $(w, v)$ by SKE, and $x$ denotes the leaf node identifier in the ORAM tree $T$. The client initializes an empty stash $ST$, an array capable of holding blocks, and a position map $PM$, consisting of two columns. The first column stores tokens identifying blocks, and the second column stores the leaf identifier currently associated with each block. For each token $tk \in SPDX'$, the client randomly assigns a leaf identifier $x \gets_R {0, 1, \dots 2^L-1}$ and records it in the position map $PM$.

\RestyleAlgo{boxruled}
\begin{algorithm}[H]\label{alg:1}
        \SetKwComment{Comment}{$\triangleright$\ }{}
	\SetKwInOut{Input}{input}
	\SetKwInOut{Output}{output}
	\SetKw{KwB}{break}
	\SetKw{KwH}{halt}
	\DontPrintSemicolon
	\caption{Setup}
	\Input{Graph $G = (V, E)$, security parameter $\lambda$, pseudo random function $P$, symmetric-key encryption \texttt{SKE}}
	\Output{Encrypted tree ORAM $T$, Position map $PM$, keys $key := (K_1, K_2,  K')$}
	\BlankLine
	Initialize empty dictionary $SPDX, SPDX'$;\;
        Initialise empty position map $PM$, and empty stash $ST$;\;
        $sp \gets \texttt{SKE.Gen}(\lambda)$;\Comment*[r]{Generate keys}
       $K_1 \gets \texttt{SKE.Gen}(sp); K_2 \gets SKE.Gen(sp); K' \gets_{R} \{0, 1\}^{\lambda}$;\;
       $key := (K_1, K_2, K')$;\;
        $SPDX := \texttt{ComputeSPDX}(G)$;\;
        \For{$(lab, val) \in SPDX$}
            {$tk_{lab} \gets P_{K'}(lab)$; $tk_{val} \gets P_{K'}(val)$; \Comment*[r]{Encrypt labels and values}
            $SPDX'[tk_{lab}]:= (tk_{val}, ct)$, where  $ct = \texttt{SKE.Encrypt}_{K_1}(val)$;\;
            }

        \For{$tk \in SPDX'$}
            {$PM[tk]:= x$, where $x \gets_R \{0, 1, \dots, 2^L -1\}$; \Comment*[r]{Assign random positions} } 
        
        \For{$tk \in PM$}
            {$(tk', ct') \gets SPDX'[tk]$; \Comment*[r]{Retrieve encrypted values}
            $c \gets SKE.Encrypt_{K_2}(block)$, where $block = (tk, (tk', ct'), x)$;\;
            Upload $c$ on $\mathcal{P}(x)$; \Comment*[r]{Encrypt and upload encrypted block} }
	\Return $T, PM, ST, key := (K_1, K_2, K')$;
\end{algorithm}

\texttt{C} begins by initializing the tree $T$. For each $block = (tk, (tk', ct'), x)$, employ \texttt{C} to encrypt the block using $K_2$: $c \gets \texttt{SKE.Encrypt}(K_2, block)$. Next, retrieve the corresponding leaf identifier $x := PM[tk]$ from the position map. Subsequently, upload the encrypted block to the appropriate node along the path from the root to the leaf $x$ in $T$, denoted as $\mathcal{P}(x)$. This process follows the same mechanism as described in PathORAM \cite{stefanov2018path}. Consequently, by the end of the setup phase, buckets containing fewer than $Z$ data blocks are populated with dummy blocks.

The client finishes the setup phase by uploading the ORAM tree $T$ to the server. The client stores the position map $PM$, the stash $ST$, and the keys. The \texttt{Setup} protocol is described in detail in Algorithm \ref{alg:1}.

\paragraph*{Query}
To query the shortest path between nodes $u$ and $v$, the client computes the token $tk$ of the query using $P_{K'}(u, v)$.
And execute \texttt{Access}$(tk)$ with $tk$, which return the decrypted $block = (tk, (tk', ct')) \gets \texttt{SKE.Decrypt}_{K_2}(c)$.
If  $block \neq \bot$ \texttt{C} parse $block$ store the ciphertext $ct'$ in a variable $resp$, and recall \texttt{Access}$(tk')$, this time with token of the next hope $tk'$. Repeats this process from step $4$ until the \texttt{Access} procedure return $\bot$. The \texttt{Query} protocol is described in detail in Algorithm \ref{alg:2}.\\

\begin{algorithm}[H]\label{alg:2}
        \SetKwComment{Comment}{$\triangleright$\ }{}
	\SetKwInOut{Input}{input}
	\SetKwInOut{Output}{output}
	\SetKw{KwB}{break}
	\SetKw{KwH}{halt}
	\DontPrintSemicolon
	\caption{Query}
	\Input{Query $q = (u, v)$, Encrypted tree ORAM $T$, $K_2, K'$}
        \Output{Encrypted path $resp$}
	\BlankLine
        
        \texttt{C} computes $tk \gets P_{K'}(q)$;  \Comment*[r]{Extract the corresponding leaf identifier} 
        $x := PM[tk]$, $status := \epsilon$, $resp := \epsilon$; \Comment*[r]{Initialization} 
        Set variable $curr := tk$; \Comment*[r]{Set current node to the leaf} 
        \While{$status \neq "SearchEnd"$ }
            {
                 \texttt{C} executes $block \gets \texttt{Access}(curr)$; \Comment*[r]{Access current block}
                 \eIf{$block = \bot$}
                    {set $status := "SearchEnd"$;}
                    {Parse $block$ as $(tk, (tk', ct'), x)$;\;
                    set $\quad resp := resp \cup ct'$ and $curr := tk'$;}
            }
        \Return $resp$; \Comment*[r]{Return encrypted path}
\end{algorithm}

\paragraph*{Reveal}
Once the client receive the cipher-text he decrypts the ciphertext with his key $K_1$ and obtains the plaintext, which is the requested path. The \texttt{Reveal} protocol is described in detail in Algorithm \ref{alg:3}.\\

\begin{algorithm}[H]\label{alg:3}
        \SetKwComment{Comment}{$\triangleright$\ }{}
	\SetKwInOut{Input}{input}
	\SetKwInOut{Output}{output}
	\SetKw{KwB}{break}
	\SetKw{KwH}{halt}
	\caption{Reveal}
	\Input{$resp$, $key:= K_1$}
        \Output{Decrypted path $p_{u, v}$}
	\BlankLine
        \texttt{C} Set variable $p_{u,v} := \epsilon$ \Comment*[r]{Initialization}
        Parse $resp$ as $ct_1 \cup ct_2 \cup \dots \cup ct_k$ \Comment*[r]{Parse encrypted path}
        \For{$i = 1, \dots k$ }
            {
                $m_i \gets \texttt{SKE.Decrypt}(K_1, ct_i)$ \Comment*[r]{Decrypt each node}
                Set $m := m \cup m_i$ \Comment*[r]{Aggregate decrypted nodes}
            }
	\Return $M$;
\end{algorithm}

\begin{theorem}[Correctness of \texttt{OBGE}]\label{thm1}
If $P$ is a secure \texttt{PRF}, \texttt{SKE} is correct and the \texttt{ORAM} is correct, then \texttt{OBGE} is correct.
\end{theorem}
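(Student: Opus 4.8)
The plan is to show correctness by tracing a single SPSP query through the three algorithms and arguing that each cryptographic/structural component behaves as its idealized counterpart, then invoking the correctness of the plaintext SP-matrix recursion from GKT. Recall that \texttt{ComputeSPDX} produces a dictionary $SPDX$ that, for every connected pair $(u,v)$, maps $(u,v)$ to the next hop $w$, and that iterating this map terminates in exactly $|p_{u,v}|$ steps with the vertices of the shortest path (this is the correctness of the GKT SP-matrix structure, which we may assume). The goal is to establish that \texttt{OBGE} reproduces this same iteration.

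First I would fix notation: let $q=(u,v)$ be a query and $p_{u,v}=(u,w_1,\dots,w_t,v)$ the shortest path. The key observation is that $P_{K'}$ is deterministic, so the token $tk=P_{K'}(u,v)$ computed in \texttt{Query} equals the label token stored in $SPDX'$ during \texttt{Setup}; since $P$ is a PRF (hence injective except with negligible collision probability over the polynomially many labels and values actually used), distinct plaintext pairs map to distinct tokens, so $SPDX'$ is well-defined as a dictionary and look-ups are unambiguous except with negligible probability. Next I would argue that \texttt{Access}$(tk)$ returns the block that \texttt{Setup} inserted for token $tk$: this is precisely the correctness guarantee of Path ORAM (every block assigned a leaf in $PM$ resides on $\mathcal{P}(x)$ or in the stash, and \texttt{Access} recovers it), combined with \texttt{SKE} correctness so that $\texttt{SKE.Decrypt}_{K_2}(c)$ returns the plaintext $block=(tk,(tk',ct'),x)$ that was encrypted. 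Then $ct'=\texttt{SKE.Encrypt}_{K_1}(w,v)$ and $tk'=P_{K'}(w,v)$, i.e. $tk'$ is exactly the label token of the next recursion step $(w,v)$ — so setting $curr:=tk'$ and looping faithfully simulates the plaintext recursion $(u,v)\to(w_1,v)\to\cdots$. When the recursion reaches $(v,v)$ or a disconnected pair, no block was stored, \texttt{Access} returns $\bot$, and the loop halts; hence \texttt{Query} returns $resp=ct'_1\cup\cdots\cup ct'_t$ where $ct'_i=\texttt{SKE.Encrypt}_{K_1}$ of the $i$-th hop. Finally \texttt{Reveal} decrypts each $ct'_i$ under $K_1$; by \texttt{SKE} correctness this recovers the hops, whose concatenation is $p_{u,v}$. (I would note in passing the apparent variable-naming slips in Algorithms \ref{alg:2} and \ref{alg:3} — $x$ vs.\ the stored leaf, $m$ vs.\ $M$ — and treat them as typographical, reading the intended semantics.)

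I would then assemble these into a union-bound argument: the only sources of failure are a PRF collision among the $O(|V|^2)$ labels/values (negligible by PRF security, since a collision would distinguish $P$ from a random function), an \texttt{SKE} decryption failure (excluded by \texttt{SKE} correctness), or an ORAM retrieval failure (negligible by ORAM correctness — in Path ORAM, stash overflow). Conditioned on none of these, the \texttt{OBGE} query returns exactly $p_{u,v}=SPath(G,(u,v))$, so \texttt{OBGE} is correct.

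The main obstacle is not any single deep step but making the reduction to PRF security clean: one must argue that treating $P_{K'}$ as injective is justified, which requires bounding the number of distinct inputs ever fed to $P$ (all of the form $(\cdot,\cdot)\in V\times V$, so at most $|V|^2$) and showing that a collision-induced correctness failure yields a PRF distinguisher. A secondary subtlety is confirming that the ORAM's obliviousness/eviction does not disturb the \emph{contents} of blocks across the sequence of \texttt{Access} calls within one query (Path ORAM re-encrypts and relocates blocks but preserves their payloads and updates $PM$ consistently), so that a block written in \texttt{Setup} is still retrievable after earlier \texttt{Access} calls in the same query have reshuffled the tree — again this is exactly Path ORAM correctness, but it is worth stating explicitly since \texttt{OBGE} issues several dependent accesses per query.
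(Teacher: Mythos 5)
Your proposal is correct and follows the same core idea as the paper's proof, whose entire content is the observation that tokens generated by $P$ collide only with negligible probability (so dictionary look-ups are unambiguous). Your version is considerably more complete than the paper's one-line argument: you additionally make explicit the reliance on \texttt{SKE} correctness and Path ORAM correctness (both hypotheses of the theorem that the paper's proof never invokes) and assemble the failure events into a union bound, which is how the proof should be written.
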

\begin{proof}
The correctness follows from the fact that the tokens generated by $P$ have a negligible probability of colliding in a random function, and consequently, this negligible probability extends to the PRF $P$ as well.
\end{proof}

\subsubsection{Complexity Analysis}

\paragraph*{Space Complexity}
The space complexity analysis encompasses both server and client storage requirements.
\begin{itemize}
    \item Server Storage. The server's storage entails maintaining a binary tree structure with a depth of $L = \lceil \log_2 N \rceil$ and $2^L$ leaves, where $N$ denotes the number of nodes in the tree. Each node, known as a bucket, can accommodate up to $Z$ real blocks. Consequently, the storage demand on the server side is $\mathcal{O}(ZN)$, which simplifies to $\mathcal{O}(N)$ due to the constant nature of $Z$.
    \item Client Storage. On the client side, storage involves retaining secret keys $K_1$, $K_2$, and $K'$ for symmetric-key encryption (\texttt{SKE}) and the Pseudo-Random Function (PRF) $P$. Additionally, the client manages the stash $ST$ and the position map $PM$. As per \cite{stefanov2018path}, the stash typically occupies $\mathcal{O}(\log N) \cdot w(1)$ blocks with high probability. The position map, comprising $NL = N \log N$ bits, translates to $\mathcal{O}(N)$ blocks.
\end{itemize}

\paragraph*{Communication complexity}
The client communicates with the server for each query $q = (u, v)$ by reading and writing a path of $Z \log N$ blocks for each level of recursion. The number  of recursion is equal to the length of the path $p_{u,v}$. Therefore, the total bandwidth used per query is $2|p_{u,v}|Z \log N$ blocks. Since $Z$ is a constant, and $|p_{u,v}|$ is at most $N$, the bandwidth usage is $\mathcal{O}(N \log N)$ blocks.

\paragraph*{Computation}
The server acts as a storage device, so it only retrieves and stores $\mathcal{O}(\log N)$ blocks per level of recursion. The computation is done by the client. The client's computation is $\mathcal{O}(N \log N) \cdot w(1)$ per query. In practice, most of this time is spent on encrypting and decrypting $\mathcal{O}(\log N)$ blocks per level of recursion. \\

Our basic construction is a straightforward adaptation of Path ORAM \cite{stefanov2018path} with the GKT graph encryption scheme \cite{ghosh2021efficient}. The client stores the stash $ST$ and the position map $PM$. However, this can consume a lot of storage space on the client side, especially for large graphs. This goes against our goal, which was to reduce the client's storage and processing burden by delegating it to the cloud. Moreover, the protocol is interactive and requires communication between the client and the server for every level of recursion, which increases the communication overhead as we can see above. 

\subsection{Enhanced Construction}\label{subsubsec2}

Recognizing the inadequacies of the basic construction when dealing with large graphs (e.g., location navigation graphs), we introduce a substantial enhancement aimed at employing a dual-pronged strategy to effectively tackle these challenges.

Initially, we implement a Trusted Execution Environment (TEE) on the server side, establishing a secure enclave that furnishes a confidential and tamper-resistant execution environment. Within this TEE enclave, a compact client module is deployed to oversee the management of the client's stash ($ST$) and position map ($PM$). This enclave functions as the Path ORAM controller, seamlessly handling client queries and furnishing corresponding results devoid of sensitive data exposure. By adopting this framework, we obviate the necessity for client-side storage and streamline the client-server interaction, thus mitigating security vulnerabilities associated with client-side operations.

Despite the discernible enhancements, the storage capacity of the TEE enclave remains a pivotal concern. Traditional TEE implementations, exemplified by Intel SGX, offer a finite internal storage capacity, typically set at 128 MB \cite{will2023intel}. Although the position map and stash are usually smaller than the graph, they may still surpass the storage threshold for larger graphs. Notably, the combined sizes of the position map and stash may exceed the available storage capacity, presenting scalability hurdles.

To fortify our construction and alleviate the inherent storage limitations of TEEs, we propose a secondary approach. Leveraging a recursive ORAM data structure \cite{cryptoeprint:2017/964}, meticulously engineered to curtail the volume of data stored within the TEE's internal memory, we aim to optimize TEE storage utilization while upholding the security assurances of the ORAM protocol. This stratagem not only enhances the scalability of our construction but also ensures the judicious employment of TEE resources, rendering it conducive for applications entailing expansive graphs and datasets.

\section{Security Analysis}
\label{sec:secu-analysis}
In the following, we examine the security of \texttt{OBGE}. We demonstrate that our construction is adaptively-secure
with respect to a well-defined leakage profile. We define our leakage profile below.

\paragraph{Setup Leakage} Our scheme has no setup leakage. Indeed, in the \texttt{Setup} phase of our scheme, we set the size of the ORAM tree $ZN > |v|^2$, so we do not reveal the size of the graph.

\paragraph{Query Leakage} In the query phase, the server only learns the leaf identifiers associated with blocks containing the path of the current query. And we replace these leaf identifiers with random ones on the tree $T$ as we access the blocks. Moreover, we re-encrypt the accessed blocks using a semantically secure SKE and relocate them either in the stash or in a new branch on the tree associated with the new leaf identifier. The server also does not learn the access pattern, which is guaranteed by the underlying Path ORAM protocol. So the only thing the server learns during a query is the length of the requested path $|p_{u, v}|$, by counting the number of recursions.\\

We can then formalize the leakage function of our scheme as $\mathcal{L} = \mathcal{L}_Q (q = (u, v)) = |p_{u, v}|$. We summarize the security of our scheme in the next theorem.

\begin{theorem}[Security of \textit{OBGE}]\label{thm1}
If $P$ is a secure PRF, \texttt{SKE} is INDCPA-secure, ORAM is secure according to definition in \cite{stefanov2018path} then \texttt{OBGE}, as described above, is adaptively $\mathcal{L}$-semantically secure, where $\mathcal{L}$ is the leakage function. 
\end{theorem}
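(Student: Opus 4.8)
The plan is to establish security in the real/ideal simulation paradigm: exhibit a PPT simulator $\mathcal{S}$ that, given only $\mathcal{L}$, produces a view computationally indistinguishable from the real one, and then bridge $\textbf{Real}_{\Pi,\mathcal{A}}(\lambda)$ and $\textbf{Ideal}_{\Pi,\mathcal{A},\mathcal{S}}(\lambda)$ with a short sequence of hybrids. Because $\mathcal{S}$ gets no setup leakage, it must build the encrypted tree ORAM $T$ knowing only the public size $N$ (recall $ZN>|V|^2$ is fixed independently of $G$): it does so by running the Path ORAM simulator's initialization, which outputs a tree of freshly encrypted dummy buckets. For the $i$-th adaptive query $\mathcal{S}$ receives $\mathcal{L}_Q(q_i)=|p_{u_i,v_i}|$; from it $\mathcal{S}$ derives the number of \texttt{Access} invocations the real \texttt{Query} would perform (one per hop, plus the terminating $\bot$-access), samples a fresh random token $tk_i$ of the right length, invokes the Path ORAM simulator that many times to obtain a simulated sequence of server-visible tree accesses and re-encryptions, and returns a response consisting of $|p_{u_i,v_i}|$ fresh \texttt{SKE} ciphertexts of a fixed plaintext. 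The claim is that this matches a real execution up to negligible probability.

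The bridging argument proceeds through the games $H_0,H_1,H_2,H_3$. $H_0$ is $\textbf{Real}_{\Pi,\mathcal{A}}(\lambda)$. In $H_1$ every evaluation of $P_{K'}(\cdot)$ — the label tokens $tk$ and value tokens $tk'$ produced in \texttt{Setup}, and the query tokens fed to \texttt{Access} in \texttt{Query} — is replaced by a lazily sampled truly random function; $|H_0-H_1|$ is bounded by the PRF advantage against $P$. In $H_2$ each inner ciphertext $ct'=\texttt{SKE.Encrypt}_{K_1}(w,v)$, and hence every component of every returned response, is replaced by an encryption of a constant string of equal length; indistinguishability follows by a standard hybrid over the polynomially many ciphertexts from \texttt{IND-CPA} security of \texttt{SKE} under $K_1$. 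In $H_3$ the encrypted tree $T$ together with the whole transcript of server-visible accesses across all queries is replaced by the output of the Path ORAM simulator run on the induced access-count sequence; this uses the obliviousness of Path ORAM from \cite{stefanov2018path} and simultaneously absorbs the $K_2$-encryptions of the buckets, which are internal to the ORAM. After these substitutions the experiment depends on $G$ and the $q_i$ only through $N$ and the values $|p_{u_i,v_i}|$, so it is exactly $\textbf{Ideal}_{\Pi,\mathcal{A},\mathcal{S}}(\lambda)$; summing the four bounds yields the negligible total. For the enhanced TEE construction the same chain applies once one notes that the (trusted) enclave adds nothing to $\mathcal{A}$'s view beyond the ORAM-tree accesses and that the \texttt{C}--enclave link is a secure channel.

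The main obstacle is adaptivity combined with composition with Path ORAM's own simulator: $\mathcal{S}$ must commit to $T$ before any query arrives and then answer each $q_i$ online, so one must appeal to the \emph{adaptive} form of the ORAM security definition, feeding its simulator the access counts one query at a time while keeping the simulated stash and position-map state consistent between calls; the reduction in $H_3$ must therefore run $\mathcal{A}$ as a subroutine of the ORAM distinguisher and translate each query into the corresponding batch of logical accesses. A secondary delicacy is that in the real world a repeated query $q_i=q_j$ forces $tk_i=tk_j$, whereas the naive $\mathcal{S}$ above samples fresh tokens; this is harmless only because all ensuing accesses are themselves produced by the ORAM simulator, so the token correlation propagates no further information, but making this rigorous calls for either folding the token sampling into the ORAM simulator's state or, more conservatively, enriching $\mathcal{L}$ with the query-equality pattern so that $\mathcal{S}$ reuses tokens on repeats. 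The remaining points — the block format, indistinguishability of dummy from real buckets, and the length bookkeeping that ties $|p_{u_i,v_i}|$ to the number of \texttt{Access} calls — are routine.
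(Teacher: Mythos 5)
Your proposal follows the same real/ideal simulation strategy as the paper, but it is substantially more structured than the paper's own argument, which consists of a two-paragraph sketch: the paper's simulator generates keys, builds a tree of random blocks and a position map, and for each query either returns a previously issued token (if the query repeats) or a fresh random token and random path, with indistinguishability asserted to follow "directly" from the randomness of the entries. You supply what that sketch omits: an explicit hybrid chain $H_0,\dots,H_3$ that isolates the PRF switch, the IND-CPA replacement of the $K_1$-ciphertexts, and the invocation of the Path ORAM simulator (absorbing the $K_2$-bucket encryptions), together with the observation that the ORAM simulator must be used in its adaptive form since $T$ is committed before any query arrives. This decomposition buys a proof that can actually be checked and a concrete accounting of where each hypothesis of the theorem is used, which the paper's version does not provide. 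You also flag a genuine inconsistency that the paper glosses over: the paper's simulator "checks the query pattern" to decide whether to reuse a token, yet the stated leakage function $\mathcal{L}_Q(q)=|p_{u,v}|$ does not give the simulator the query-equality pattern, so the simulator as written is not a function of the declared leakage. Your suggested fix (enrich $\mathcal{L}$ with the query-equality pattern, or argue that fresh tokens are harmless because all downstream accesses are ORAM-simulated) addresses a real gap in the paper's treatment rather than introducing one of your own. The one caveat is that your $H_3$ step leans on an "adaptive" Path ORAM simulator whose existence you should justify from the statistical obliviousness guarantee of \cite{stefanov2018path}, since that paper states its security as indistinguishability of access sequences rather than as an explicit simulator; this is standard but should be said.
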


\begin{proof}
We define a simulator that works as follows. Given $\mathcal{L}$, the simulator generates keys $key := (K_1, K_2, K_3)$, constructs an ORAM tree $T$ that holds $N$ random blocks and a position map $PM$ as specified in the real \texttt{Setup}. For each query $q = (u, v)$, the simulator receives $\mathcal{L}_Q(q)$. It checks the query pattern to see if the query is new. If not, it returns the token it generated before. Otherwise, it randomly picks a token $tk$ that is not previously generated and a random path $p$ and outputs $(tk, p)$. 

To show that this simulator satisfies our security definition, we first argue that the random choices of $tk$ and $p$ will be indistinguishable from the outputs of $P$ and $Reveal$. Then we argue that since all entries of $T$ and $PM$ are random, $(tk, p)$ will be randomly distributed and thus indistinguishable from the output of the PRF $P$ and $Reveal$. The result follows directly.
\end{proof}

\section{Evaluation}
\label{sec:perfo-analysis}
\subsection{Dataset}
For our evaluation, we utilized a real world graph dataset constructed from the OpenStreetMap data for navigating Paris \cite{paris-geo}. This dataset represents the city's infrastructure as a network of interconnected nodes and edges. The nodes in our graph dataset consist of road intersections, providing a detailed representation of the city's street layout. Additionally, points of interest, which encompass various locations searchable within Paris, serve as supplementary nodes. The total number of nodes in the constructed graph is $102,037$.
\subsection{Experimental Setup}
We developed SGX-based implementations of the proposed path ORAM and recursive path ORAM, leveraging Intel's SGX SDK version 2.2 and Intel's IPP cryptographic library. Our experiments were conducted on a machine equipped with 32 GB of RAM and an Intel(R) Core(TM) i7-6770HQ 2.60GHz CPU, with a maximum enclave size of 128 MB. Throughout our tests, we maintained Z=5, representing the number of blocks within each node of the recursive path ORAM.
\begin{figure}[h]
\centering
\includegraphics[scale=0.4]{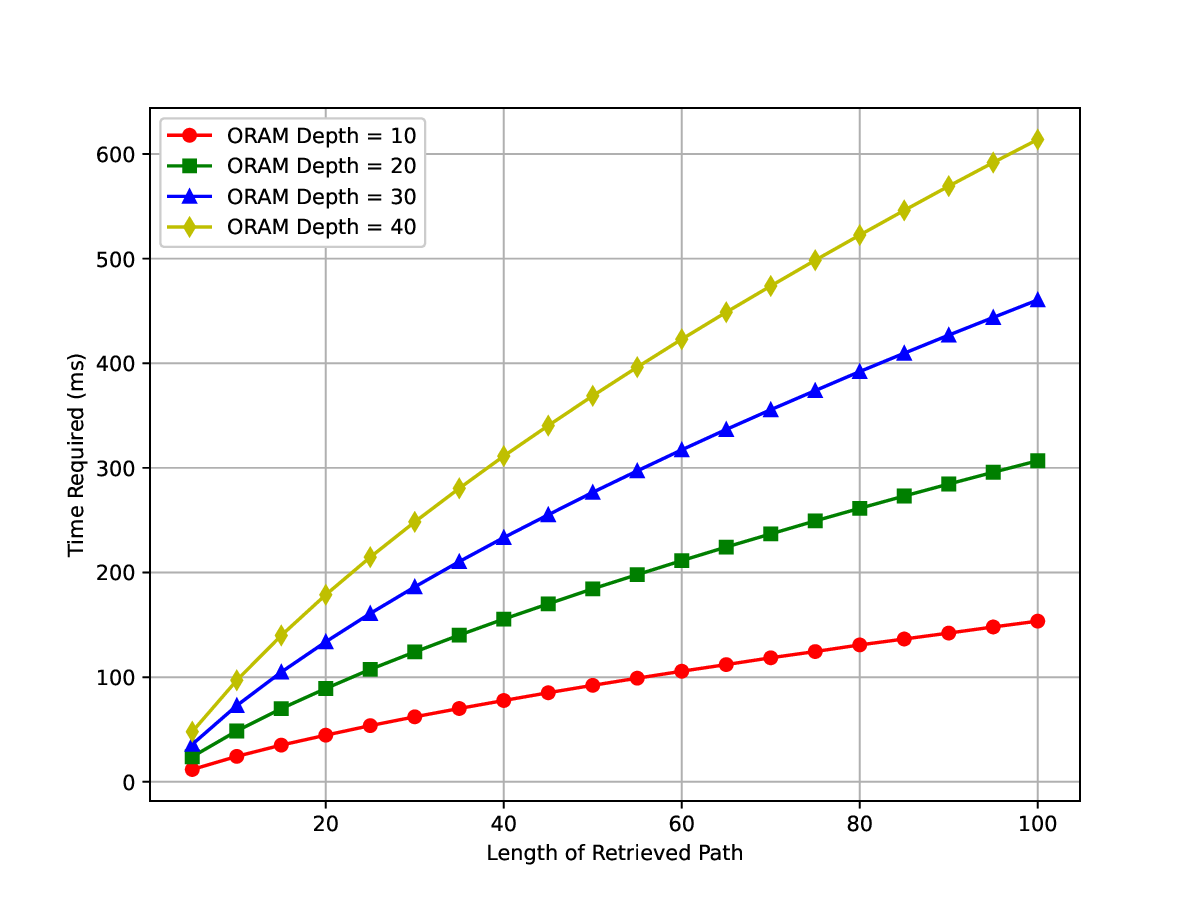}
\caption{Time Required for Path Query Response in Varying recursive ORAM Depths}
\label{fig:time}
\end{figure}

Figure \ref{fig:time} reports the correlation between the length of the retrieved path and the time required for query response but also underscore the practicality of the proposed solution for finding the closest paths. As the length of the retrieved path increases, indicating a more extensive search space, the time required for query response escalates accordingly. Despite this escalation, the solution's efficiency remains evident, particularly in scenarios necessitating the exploration of longer paths. This suggests the solution's viability and effectiveness in practical applications where extensive pathfinding is required, demonstrating its potential utility in various real-world contexts.

\section{Conclusion}
\label{sec:conclusion}
In conclusion, the need for secure graph encryption schemes is paramount in various applications, including navigation systems, social networks, and infrastructure modeling. Existing methods, while efficient, suffer from vulnerabilities such as access pattern and query pattern leakage, which compromise security and privacy. To address these challenges, we propose a novel graph encryption scheme that leverages Oblivious RAM (ORAM) and Trusted Execution Environments (TEE) to conceal both access pattern and query pattern leakage. Our scheme builds upon the foundation of existing state-of-the-art solutions like the GKT scheme, enhancing security while maintaining efficiency. Through evaluations on real-world datasets, we demonstrate the practicality and effectiveness of our solution in protecting sensitive graph data while allowing  secure shortest path queries.

\begin{credits}
\subsubsection{\ackname}
This work were supported by the french national research agency funded project AUTOPSY (grant no. ANR-20-CYAL-0008).  Additionally, part of
this work was done as part of IRT SystemX project PFS (Security of Smart Ports).

\end{credits}
%
%
%
\bibliographystyle{splncs04}
\bibliography{biblio}
\end{document}